\newcommand{\eal}{\end{align}}
\newcommand{\beq}{\begin{equation}}
\newcommand{\eeq}{\end{equation}}
\newcommand{\bea}{\begin{eqnarray}}
\newcommand{\eea}{\end{eqnarray}}
\newcommand{\beas}{\begin{eqnarray*}}
\newcommand{\eeas}{\end{eqnarray*}}
\newcommand{\ba}{\begin{array}}
\newcommand{\ea}{\end{array}}
\newcommand{\bit}{\begin{itemize}}
\newcommand{\eit}{\end{itemize}}
\newcommand{\ben}{\begin{enumerate}}
\newcommand{\een}{\end{enumerate}}
\newcommand{\bR}{{\mathbb R}}
\newcommand{\bZ}{{\mathbb Z}}
\newtheorem{theorem}{Theorem}[section]
\newtheorem{proof}{Proof}
\newtheorem{assumption}{Assumption}
\newtheorem{definition}{Definition}[section]
\newtheorem{lemma}{Lemma}[section]
\newtheorem{problem}{Problem}[section]
\begin{document}
\date{}

\title{On-line direct data driven controller design approach with automatic update for some of the tuning parameters}
\author{M. Tanaskovic, L. Fagiano, C. Novara and  M. Morari \thanks{M. Tanaskovic and M. Morari are with the Automatic Control Laboratory, Swiss Federal Institute of Technology Zurich, Switzerland. L. Fagiano is with ABB Switzerland Ltd., Corporate Research,Baden-Daettwil, Switzerland and C. Novara is with the Dipartamento di Automatica e Informatica, Politechnico di Torino, Italy. Email addresses: tmarko@control.ee.ethz.ch (Marko Tanaskovic),  lorenzo.fagiano@ch.abb.com (Lorenzo Fagiano), carlo.novara@polito.it (Carlo Novara), morari@control.ee.ethz.ch (Manfred Morari). Corresponding author M.~Tanaskovic.}}

\maketitle

\section{Introduction}
This manuscript contains technical details of recent results developed by the authors on the algorithm for direct design of controllers for nonlinear systems from data that has the ability to to automatically modify some of the tuning parameters in order to increase control performance over time.

\section{Problem formulation}

\label{S:problem}

We consider a discrete, time invariant, nonlinear system with one input and $n_x$ states that can be represented by the following state equation:
\begin{equation}\label{Eq:sys}
x_{t+1}=g(x_t,u_t)+e_t,
\end{equation} 
where $t\in \bZ$ is the discrete time step, $u_t\in \bR$ is the control input, $x_t\in \bR^{n_x}$ is the vector of states and $e_t \in \bR^{n_x}$ is the vector of disturbance signals that accounts for the contribution of both the measurement and process disturbances. We make the following two assumptions about the disturbance signal $e_t$ and the nonlinear function $g$:
\begin{assumption}\label{A:bound}
The disturbance $e_t$ is bounded in magnitude:
\begin{equation}\label{Eq: noise_bound}
e_t \in B_{\epsilon}\doteq \{e_t:  || e_t||\leq \epsilon, \forall t\in \bZ\},
\end {equation}
for some $\epsilon >0$. 
\end{assumption}
\begin{assumption}\label{A:lip}
The function $g$ is Lipschitz continuous with respect to $u$, i.e. $g(x,\cdot)\in \mathcal{F}(\gamma_g, U)$ for any $x\in X$, where $U\subset \bR$ and $X\subset \bR^{n_x}$ are compact (possibly very large) sets, and
\begin{equation}
\begin{aligned}
\mathcal{F}(\gamma_g,\! U)\!\doteq \!\left\{\!\!\begin{aligned} g\!:&
||g(0)||\!<\!\infty,\\\!&||g(u_1)\!\!-\!\!g(u_2)||\!\leq\! \gamma_g
||u_1\!\!-\!\!u_2||,\forall u_1,\!u_2\!\in\! U \end{aligned}\!\right\}\!\!.
\end{aligned}
\end{equation}
\end{assumption}  
The notation $||\cdot ||$ stands for a suitable vector norm chosen by the user (typically $2$- or $\infty$-norm) and the presented results hold for any particular norm selection.

It is assumed that the nonlinear function $g$ that describes the dynamics of the system \eqref{Eq:sys} is unknown, but that a set $\mathcal{D}_N$ of $N$ noise corrupted input and state measurements generated by the system \eqref{Eq:sys} is available:
\beq\label{Eq:trdat}
\mathcal{D}_N\doteq\left\{u_t ,\omega_t \right\}_{t=-N}^{-1},\, \omega_t\doteq(x_{t}, x_{t+1}).
\eeq
We make the following assumption on the training data \eqref{Eq:trdat}.
\begin{assumption}\label{A:measurements}
The available measurements $\mathcal{D}_N$ are such that $u_t\in U$ and $w_t \in X\times X, \, \forall t=-N, \hdots, -1$.
\end{assumption}

In this note we consider the notion of finite gain stability.
\begin{definition}\label{D:FGS} (Finite gain stability)
A nonlinear system (possibly time varying) with input $u_t \in
\bR$, state $x_t \in \bR^{n_x}$ and disturbance $e_t \in B_{\epsilon}$ is \emph{finite gain stable} if there exist finite and nonnegative constants
$\lambda_1$, $\lambda_2$ and $\beta$ such that: \beq\label{Eq:stabs}
||\bold{x}||_{\infty}\leq \lambda_1 ||\bold{u}||_{\infty}+\lambda_2||\bold{e}||_{\infty}+\beta,
\forall \bold{u}\in \mathcal{U}, \forall \bold{e} \in
\mathcal{B}_{\epsilon}, \eeq where $\bold{x}=(x_1,x_2,\hdots)$,
$\bold{u}=(u_1,u_2,\hdots)$, $\bold{e}=(e_1,e_2,\hdots)$,
$||\bold{x}||_{\infty}\doteq \sup\limits_k||x_k||$ and $\mathcal{U}$
and $\mathcal{B}_{\epsilon}$  are the domains of the input and
disturbance signals, respectively.
\end{definition} 
Based on this definition, we introduce the notion of $\gamma$ stabilizability.
\begin{definition}\label{D:stabilizability}
The system \eqref{Eq:sys} is $\gamma$-\emph{stabilizable} if there exists a $\gamma<\infty$ and a function $f\in \mathcal{F}(\gamma,\bR^{2n_x})$ such that the closed-loop system with input $r_t\in B_r\subseteq X$ and disturbance $e_t\in \mathcal{B}_{\epsilon}$:
\beq\label{Eq:closed_loopsys}
x_{t+1}=g\left(x_t,f(x_t,r_{t+1})\right)+e_{t}
\eeq
is finite gain stable. 
\end{definition}

In Definition \ref{D:stabilizability}, the reference signal is assumed to belong to a compact $B_{\overline{r}}\subseteq X$, i.e. the reference is bounded in norm by the scalar $\overline{r}$ and it is never outside the set where the state trajectory shall be confined.

\begin{assumption}\label{A:stabilizability}
The system \eqref{Eq:sys} is $\gamma$-stabilizable for some $\gamma<\infty$.
\end{assumption}

We can finally state the problem that we address.
\begin{problem}\label{P:prob}
Use the batch of data $\mathcal{D}_N$, collected up to $t=0$, to design a feedback controller whose aim is to track a desired
reference signal $r_t\in B_{\overline{r}}$ for $t>0$. Once the controller is in operation, carry out on-line refinement of the design by exploiting the incoming input and state
measurements, while keeping the closed loop system finite gain stable.
\end{problem}

\section{On-line direct control design method}\label{S:Overall_algorithm}

We approach Problem \ref{P:prob} from the point of view of data-driven, direct dynamic inversion techniques. In this context, we assume the existence of an ``optimal'' (in a sense that will be shortly specified) inverse of the system's dynamics \eqref{Eq:sys} among the functions that, if used as controller, stabilize the closed-loop system. Then, we build from the available prior knowledge and data a set of functions that is guaranteed to contain the optimal inverse, and we exploit such a set to derive an approximated inverse, which we use as feedback controller. This approach involves several preliminary ingredients, explained in the following sub-sections.

\subsection{Optimal inverse and controller structure}
Following the definitions and notation introduced in \cite{lorenzo}, for a given control function $f$ we define the point-wise inversion error as:
\beq\label{Eq:PIE}
IE(f,r,x,e) \doteq \| r-g(x,f(r,x))-e\|,
\eeq
and the global inversion error as:
\beq\label{Eq:global}
GIE(f)=_L\|IE(f,\cdot,\cdot,\cdot)\|,
\eeq
where $_L\|\cdot\|$ in \eqref{Eq:global} is a suitable function norm (e.g. $L_\infty$) evaluated on $X\times B_{\overline{r}}\times B_{\epsilon}$. Based on Assumption \ref{A:stabilizability}, there exist a set $\mathcal{S}$ containing all functions $f$ that stabilize the closed loop system. Then, we define the optimal inverse controller function $f^*$ as:
\beq\label{Eq:opt_ctrl}
f^*=\text{arg} \min\limits_{\mathcal{S}\bigcap \mathcal{F}_{X\times X}}GIE(f),
\eeq
where $\mathcal{F}_{X\times X}$ denotes the set of all Lipschitz continuous functions on $X \times X$. We denote the Lipschitz constant of $f^*$ with $\gamma^*$, and the related constants $\lambda_1$, $\lambda_2$ and $\beta$, obtained if the controller $f^*$ were used in closed-loop (see \eqref{Eq:stabs}), by
$\lambda_1^*$, $\lambda_2^*$ and $\beta^*$.

Considering the measured data available up to a generic time $t$, we can write the control input as:
\beq\label{Eq:input_equation}
u_t=f^*(\omega_t)+d_t,
\eeq
where $d_t$ is a signal accounting for the unmeasured noise and disturbances and possible inversion errors. From Assumptions \ref{A:bound} and \ref{A:lip}, it holds that as long as the state and input trajectories evolve in the sets $X$ and $U$, respectively, the scalar $d_t$ has to be bounded, i.e. $d_t \in B_{\delta}\subset \bR$ with $\delta$ being a positive constant. Then, following a set membership identification approach (see e.g. \cite{traub22,nonlinear_milanese}), we consider the set of feasible inverse functions at time step $t$ ($FIFS_t$), i.e. the set of all functions $f\in \mathcal{F}_{X \times X}$ that are consistent with  the available data and prior information:
\beq\label{Eq:FIFS0}
 FIFS_t \doteq \bigcap\limits_{j=-N,\hdots,t-1}H_j,
\eeq
where:
\beq\label{Eq:FIFS}
H_j \doteq \{ f\in \mathcal{F}_{X\times X}:|u_j-f(\omega_j)|\leq \delta\}.
\eeq
The inequality in \eqref{Eq:FIFS} stems from the observation that the measured input $u_t$ and the value of function $f^*$ evaluated at the corresponding $\omega_t$ can not be larger than the bound on the amplitude of the signal $d_t$.

Under Assumptions \ref{A:bound}--\ref{A:stabilizability}, if $u_t\in U$ and $x_t \in X\times X,\, \forall t\geq -N$, then the optimal inverse $f^*$ belongs to $FIFS_t$, i.e. $f^*\in FIFS_t$ for all $t$. In set membership identification, an estimate $f\approx f^*$ belonging to the set $FIFS_t$ enjoys a guaranteed worst-case approximation error not larger than twice the minimal that can be achieved (see e.g. \cite{traub22} for details). Motivated by this accuracy guarantee, we update the approximation of the optimal inverse controller $f^*$, $f_t$ on-line in order to approach the set $FIFS_t$. First, in order to have a tractable computational problem, we  parameterize the controller $f_t$ with a finite sum of kernel functions:
\beq\label{Eq:parametrization}\nonumber
f_t(\omega)=a_t^TK(\omega,W_t),
\eeq
where $a_t \in \bR^{L_t}$
is the vector of weights, and $K(\omega,W_t)=\left[
\kappa(\omega,\tilde{\omega}_1),\hdots,
\kappa(\omega,\tilde{\omega}_{L_t}) \right]^T$ is a vector of kernel functions
$\kappa(\cdot,\tilde{\omega}_i):\bR^{2n_x}\rightarrow
\bR,i=1,\hdots,L_t$ belonging to a dictionary that is
uniquely determined by the $L_t$  kernel function centers
$W_t=\{\tilde{\omega}_1,\hdots, \tilde{\omega}_{L_t}\}$.
Then, at each time $t$ we update the set $W_t$, which determines the kernel function dictionary, and we also recursively update the weights $a_t$ exploiting the knowledge of $FIFS_t$ \eqref{Eq:FIFS0}, with an approach  inspired by the projection-based learning scheme presented in \cite{Proj_3} in the context of signal processing. Moreover, in order to achieve finite gain stability of the closed-loop system, we exploit the information that $f^*\in FIFS_0$ to derive a robust constraint on the vector of weights $a_t$, which we impose in the on-line procedure.

In the following, we provide the details of these steps and at the end we summarize the overall design method.

\subsection{Robust inequality to enforce closed loop stability}
We require the approximated inverse, $f_t$, to satisfy the following inequality at each time step $t\geq 0$:
\beq\label{Eq:stab_cond}
\begin{aligned}
&|f_t(\omega^+_t)-f^*(\omega^+_t)|\leq \gamma_{\Delta}\|x_t\|+\sigma, \\&\forall f^* \in \mathcal{F}(\gamma^*,X\times X) \cap FIFS_0,\, \forall t\geq 0,
\end{aligned}
\eeq
where $\omega^+_t=[x_t,r_{t+1}]^T$ and $\gamma_{\Delta}, \sigma \in \bR, \gamma_{\Delta}, \sigma>0$, are design parameters. Guidelines on how these parameters should be selected in order to guarantee finite gain stability of the closed-loop are given in Section \ref{S:Tuning}. The idea behind \eqref{Eq:stab_cond} is to limit the discrepancy between the input computed by the approximate inverse $f_t$ at time step $t$, i.e. $u_t=f_t(\omega^+_t)$, and the one given by the optimal inverse $f^*$ to a sufficiently small value, which depends linearly on the norm of the current state. However, since the optimal inverse $f^*$ is not known, we require the inequality \eqref{Eq:stab_cond} to be satisfied robustly for all functions in $FIFS_0$ that have the Lipschitz constant equal to $\gamma^*$. As mentioned above, such a function set is in fact guaranteed to contain $f^*$ under our working assumptions.

To translate the inequality \eqref{Eq:stab_cond} into a computationally tractable constraint on the parameters $a_t$, we exploit the information that $f^*\in\mathcal{F}(\gamma^*,X\times X) \cap FIFS_0$ to compute tight upper and lower bounds  on $f^*(\omega^+_t)$ using the following result from \cite{nonlinear_milanese}.
\begin{theorem}\label{T:milanese}(Theorem 2 in \cite{nonlinear_milanese})
Given a nonlinear function $f^*\in \mathcal{F}(\gamma^*,X\times X)\cap FIFS_0$, the following inequality holds:
\[
\underline{f}(\omega)\leq f^*(\omega)\leq \overline{f}(\omega),
\]
where:
\beq\label{Eq:bounds}
\begin{aligned}
\overline{f}(\omega)&=\min\limits_{k=-N,\hdots,-1}\left(u_k+\delta+\gamma^*\|\omega-\omega_k\|\right)\\
\underline{f}(\omega)&=\max\limits_{k=-N,\hdots,-1}\left(u_k-\delta-\gamma^*\|\omega-\omega_k\| \right).
\end{aligned}
\eeq
\end{theorem}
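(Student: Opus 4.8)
The plan is to derive the two bounds separately, each as a direct consequence of two facts only: (i) $f^*$ is consistent with every data point, i.e. $f^* \in FIFS_0$, so $|u_k - f^*(\omega_k)| \le \delta$ for each $k = -N,\dots,-1$; and (ii) $f^*$ is Lipschitz with constant $\gamma^*$ on $X\times X$, so $|f^*(\omega) - f^*(\omega_k)| \le \gamma^* \|\omega - \omega_k\|$ for every $k$. First I would fix an arbitrary evaluation point $\omega \in X\times X$ and an arbitrary index $k$. From (i), rewriting $|u_k - f^*(\omega_k)| \le \delta$ as $u_k - \delta \le f^*(\omega_k) \le u_k + \delta$, and from (ii), $f^*(\omega_k) - \gamma^*\|\omega - \omega_k\| \le f^*(\omega) \le f^*(\omega_k) + \gamma^*\|\omega - \omega_k\|$. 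Chaining these gives, for that single $k$,
\[
u_k - \delta - \gamma^*\|\omega - \omega_k\| \;\le\; f^*(\omega) \;\le\; u_k + \delta + \gamma^*\|\omega - \omega_k\|.
\]

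Since this inequality holds for \emph{every} $k \in \{-N,\dots,-1\}$, the upper bounds can be intersected: $f^*(\omega)$ is below the minimum over $k$ of the right-hand sides, which is exactly $\overline f(\omega)$; and symmetrically $f^*(\omega)$ is above the maximum over $k$ of the left-hand sides, which is $\underline f(\omega)$. This yields $\underline f(\omega) \le f^*(\omega) \le \overline f(\omega)$ for the fixed $\omega$, and since $\omega$ was arbitrary the claim follows. I would also remark briefly that $FIFS_0 \cap \mathcal{F}(\gamma^*, X\times X)$ being nonempty (it contains $f^*$) guarantees $\overline f(\omega) \ge \underline f(\omega)$, so the bounds are consistent, though this is not strictly needed for the stated inequality.

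There is essentially no obstacle here: the result is a one-line triangle-inequality argument combined with taking a pointwise min/max over the finite data set, and the only thing to be careful about is bookkeeping the direction of each inequality when combining the membership constraint with the Lipschitz bound. The one genuinely substantive point — that $f^* \in FIFS_0$ in the first place, so that the data constraints $|u_k - f^*(\omega_k)| \le \delta$ are legitimate — is not part of this theorem: it is established in the surrounding text from Assumptions \ref{A:bound}--\ref{A:stabilizability} and the definition of $\delta$, and here it is simply a hypothesis. So the proof is short and self-contained given the two defining properties of the set $\mathcal{F}(\gamma^*, X\times X) \cap FIFS_0$.
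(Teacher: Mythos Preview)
Your argument is correct and is exactly the standard one: combine the data-consistency inequality $|u_k - f^*(\omega_k)|\le\delta$ with the Lipschitz inequality $|f^*(\omega)-f^*(\omega_k)|\le\gamma^*\|\omega-\omega_k\|$ to get a two-sided bound for each $k$, then take the min (resp.\ max) over $k$. Note, however, that the paper does not actually supply its own proof of this theorem; it is quoted as Theorem~2 of \cite{nonlinear_milanese} and used as a black box, so there is no ``paper's proof'' to compare against---your write-up simply fills in the (short) argument that the cited reference contains.
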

Exploiting Theorem \ref{T:milanese}, the robust constraint \eqref{Eq:stab_cond} can be satisfied by enforcing the following two inequalities on the vector of weights $a_t$:
\beq\label{Eq:robust_ineq}
\begin{aligned}
&-\gamma_{\Delta}\|x_t\|-\sigma+\overline{f}(\omega^+_t)\leq a_t^TK(\omega_t^+,W_t)\\
& a_t^TK(\omega_t^+,W_t) \leq \gamma_{\Delta}\|x_t\|+\sigma+\underline{f}(\omega^+_t)
\end{aligned}
\eeq
Note that the value of $\gamma^*$ that is required in order to enforce the constraints in \eqref{Eq:robust_ineq} needs to be estimated from the available measurement data (see also Section \ref{S:Tuning}). We describe next the approach to update the dictionary of kernel functions and the vector of weights $a_t$.

\subsection{Updating the dictionary of kernel functions}

The data generated by any Lipschitz continuous nonlinear function  evaluated at a finite number of points can be well approximated by a dictionary of kernel functions centered at the same points. In our on-line controller design, we let the dictionary grow and incorporate new kernel functions as new input and state measurements are collected. However, adding a new function to the dictionary at each time step would lead to an unlimited growth of the dictionary size $L_t$ over time. Moreover, this would result in a dictionary that is not sparse, i.e. with many functions that are similar (centered at points close to each other), and with possible over-fitting of the measurement data. To avoid these problems, we choose to add a new function only if it is sufficiently different from those already contained in the dictionary. As indicator of similarity, we use the so-called coherence factor (see e.g. \cite{koherence} for more details):
\beq\label{Eq:koherence}
\mu(\omega,W_t)=\max\limits_{i=1,\hdots, L_t}|\kappa(\omega,\tilde{\omega_i})|.
\eeq
Note that $\mu(\omega,W_t)\in (0,1]$, and that $\mu(\omega,W_t)=1$ if and only if $\omega \in W_t$. Hence, the larger the coherence value in \eqref{Eq:koherence}, the more similar is the kernel function centered at $\omega$ to some function already in the dictionary. In our design technique, we set a threshold $\overline{\mu}\in (0,1)$ and we add a particular data point $\omega$ to the set of function centers $W_t$ only if $\mu(\omega,W_t)\leq \overline{\mu}$. This approach guarantees that the size of the dictionary will remain bounded over time (see e.g. \cite{koherence}).

\subsection{Updating the vector of weights}

As a preliminary step to the recursive update of the weights $a_t$ note that, as discussed above, the size of the dictionary can expand from time step $t-1$ to time step $t$ and therefore in general it will hold that $a_{t-1}\in \bR^{L_{t-1}}$ and $a_t \in \bR^{L_t}$ with $L_{t-1}\leq L_t$. Therefore, in order to properly define the updating algorithm at time $t$, we consider the vector $a^+_{t-1}\in \bR^{L_t}$:
\beq\label{Eq:vector_ext}
a^+_{t-1}=[a_{t-1}^T, \underbrace{0,\hdots,0}_{L_t-L_{t-1}}]^T,
\eeq
obtained by initializing the weights corresponding to the kernel functions that are added to the dictionary to zero.

To introduce the updating of the vector $a_t \in \bR^{L_t}$, we note that each pair $(u_j,\omega_j),\,j=-N,\ldots,t-1$ defines, together with the dictionary of kernel functions at time step $t$, the following set:
\beq\label{Eq:measurement_strip}
S_{jt}\doteq\{ a \in\bR^{L_t}: |a^T\kappa(\omega_j,W_t)-u_j|\leq \delta\},
\eeq
which is a strip (hyperslab) in $\bR^{L_t}$. If $a_t\in S_{jt}$, then  the corresponding function $f_t$ in \eqref{Eq:parametrization} belongs to the set $H_j$ defined in \eqref{Eq:FIFS}. We further define the projection of a point in $\bR^{L_t}$ onto the strip $S_{jt}$ as:
\beq\label{Eq:measurement_projection}
P_{jt}(a)\doteq\min\limits_{\hat{a}\in S_{jt}} \|a-\hat{a}\|_2.
\eeq
Note that calculating the projection \eqref{Eq:measurement_projection} amounts to solving a very simple linear program, whose solution can be explicitly derived (see e.g. \cite{theodoridis_rev}). Therefore, calculating the projection of any point in $\bR^{L_t}$ onto a measurement strip  as in \eqref{Eq:measurement_projection} can be done computationally very efficiently. Finally, we consider the hyperslab defined by the stability constraint \eqref{Eq:robust_ineq}:
 \beq\label{Eq:stability_strip}
S^+_t\!\doteq\!\left\{\!\begin{aligned}a \in
\bR^{L_t}\!\!:\;&a^TK(\omega^+_t,W_t) \geq-\!\gamma_{\Delta}\|x_t\|\!-\!\sigma\!+\!\overline{f}(\omega^+_t)
\\&a^TK(\omega^+_t,W_t) \! \leq \!
\gamma_{\Delta}\|x_t\|\!+\!\sigma\!+\!\underline{f}(\omega^+_t) \end{aligned}\!\right \}\!,
\eeq
and we denote the corresponding projection operator with $P^+_t(\cdot)$.

From the definitions of the hyperslabs $S_{jt}$ and $S_t^+$ in \eqref{Eq:measurement_strip} and \eqref{Eq:stability_strip} it follows that if \[
a_t\in  S_t^+\bigcap\left( \bigcap_{j=-N,\hdots,t-1}S_{jt}\right),
\]
then the corresponding function $f_t$ belongs to the set $FIFS_t$ and satisfies the stabilizing constraint \eqref{Eq:stab_cond}. However, to find a point that belongs to the intersection of all $S_{jt},\, j=-N,\hdots,t-1$ at each time step is computationally challenging. Therefore, we exploit the idea at the basis of projection learning algorithms, that by repeatedly applying the projection operators to a point, the result will eventually fall in the intersection of the considered hyperslabs. In particular, we update the vector of weights $a_t$ in two steps:  first, following the idea of \cite{Proj_3}, we calculate a convex combination of its projections onto the hyperslabs defined by a finite number $q\geq1$ of the latest measurements; then, we project the obtained point onto the hyperslab $S_t^+$ in order to ensure the satisfaction of the stabilizing constraint \eqref{Eq:stab_cond}.
To be more specific, let the set of indexes $J_t=\{\max\{-N,t-q\}, \hdots, t-1 \}$ contain the time instants of the last $q$ state and input measurements, and let $I_t=\{j\in J_t: a_{t-1}^+\notin S_{jt} \}$ be the subset of indexes such that the weighting vector $a^+_{t-1}$ does not belong to the corresponding hyperslabs. Then, we compute our update of the weighting vector $a_t$ from $a_{t-1}^+$ as:
\beq\label{Eq:projection_algorithm}
a_{t}=P^+_t\left(a^+_{t-1}+ \sum_{j\in I_t} \frac{1}{card(I_t)} \left(P_{jt}(a^+_{t-1})-a^+_{t-1}\right)\right),
\eeq
where $card(I_t)$ denotes the number of elements in $I_t$. This update  can be computed very efficiently with the explicit formulas for vector projections and eventually by parallelizing the projection operations.

\subsection{Summary of the proposed design algorithm}
The described procedures to update the dictionary of kernel functions and the weights $a_t$ form our on-line scheme to compute the feedback controller $f_t$, summarized in Algorithm \ref{A:learning_algorithm}. 

\begin{algorithm}
\begin{itemize}
\item[1)] Collect the state measurement $x_t$. If $t<0$, set $\omega^+_t=[x_t,x_{t+1}]^T$, otherwise set $\omega^+_t=[x_t,r_{t+1}]^T$.
\item[2)] Update the dictionary $W_t$ starting from $W_{t-1}$ and adding $\omega^+_t$ if $\mu(\omega^+_t,W_{t-1})\leq \overline{\mu}$ and $\omega_{t-1}$ if $\mu(\omega_{t-1}, W_{t-1})\leq \overline{\mu}$. Form the vector $a^+_{t-1}$ according to \eqref{Eq:vector_ext}.
\item[3)] Calculate $a_t$ according to \eqref{Eq:projection_algorithm}.
\item[4)] If $t\geq 0$, calculate the input $u_t=a_t^TK(\omega^+_t,W_t)$ and apply it to the plant.
\item[5)] Set $t=t+1$ and go to $1)$.
\end{itemize}
\caption{Feedback control algorithm based on the on-line direct control design scheme}\label{A:learning_algorithm}
\end{algorithm}
For $t\geq0$, such an algorithm is both a controller and a design algorithm, while for $t<0$ it only acts as a design algorithm.
\section{Algorithm tuning}\label{S:Tuning}
In order to implement Algorithm \ref{A:learning_algorithm}, several tuning parameters need to be selected. These are the noise bound $\delta$ and the Lipshitz constant $\gamma^*$ which are required for calculating the projections on the hyperslabs $S_{jt}$ and $S_t^+$. In addition parameters $\gamma_{\Delta}$ and $\sigma$ in \eqref{Eq:robust_ineq} need to be selected. Careful selection of these parameters guarantees finite gain stability of the closed loop. Namely the parameters $\gamma_{\Delta}$ and $\sigma$ should be selected such that:
\beq\label{Eq:cond_gama}
\gamma_{\Delta}\in \left(0,\frac{1}{\gamma_g\lambda_2^*}\right),
\eeq
and
\beq\label{Eq:cond_sigma}
\sigma>\frac{D_0}{2},
\eeq
where 
\beq
D_0
\doteq \sup_{\omega \in B_{\overline{xr}}}\left (\overline{f}(\omega)
-\underline{f}(\omega)\right),
\eeq
with
\beq
B_{\overline{x}\overline{r}} \doteq \left\{\omega \in \bR^{n_x}
\times R: \omega=(x,r), \forall x\in B_{\overline{x}}, r\in \mathcal{B}_{\overline{r}}
\right\}.
\eeq 
and with $\overline{x}$ given as
\beq\label{Eq:max_x}
\overline{x}\doteq\frac{\lambda_1^*\overline{r}+\gamma_g\lambda_2^*\sigma+\lambda_2^*\epsilon+\beta^*}{1-\gamma_g\lambda_2^*\gamma_{\Delta}},
 \eeq 

In order to verify whether \eqref{Eq:cond_gama} and \eqref{Eq:cond_sigma} hold, the values of $\epsilon$, $\delta$, $\gamma_g$, $\gamma^*$, $\lambda_1^*$, $\lambda_2^*$ and $\beta^*$ should be known. The values of $\lambda_1^*$, $\lambda_2^*$ and $\beta^*$  can not be
estimated based on the available data and they have to be guessed. Since these parameters are related to the performance of the optimal inverse controller $f^*$ that should typically result in small tracking error (i.e. the state of the corresponding closed loop system should be close to the desired reference signal), a reasonable guess for $\lambda_1^*$ and $\lambda_2^*$ is a value slightly greater than $1$ and for $\beta^*$ a value close to $0$. The Lipschitz constants $\gamma_g$ and $\gamma^*$ as well as the disturbance bounds $\epsilon$ and $\delta$ can be estimated from the
available training data (see e.g. \cite{lorenzo}). In order to be able to use the parameter estimates that might be subject to estimation errors while still retaining the stability guarantees, the obtained estimates can be inflated by positive constants that reflect the level of estimation uncertainty, i.e. the parameters $\epsilon$, $\delta$, $\gamma_g$ and $\gamma^*$ can be selected as: $\epsilon=\hat{\epsilon}_{-1}+c_{\epsilon}$, $\delta=\hat{\delta}_{-1}+c_{\delta}$, $\gamma_g=\hat{\gamma}_{g,-1}+c_{\gamma_g}$ and $\gamma^*=\hat{\gamma}^*_{-1}+c_{\gamma^*}$, where $\hat{\epsilon}_{-1}$, $\hat{\delta}_{-1}$, $\hat{\gamma}_{g,-1}$ and $\hat{\gamma}^*_{-1}$ denote the parameter estimates obtained by applying the algorithms described in \cite{lorenzo} to the training data $\mathcal{D}_N$ and $c_{\epsilon}$, $c_{\delta}$, $c_{\gamma_g}$ and $c_{\gamma^*}$ are positive constants that should be selected by the control designer and that should reflect his feeling on the size of the possible estimation error. Hence, in order to ensure the satisfaction of \eqref{Eq:cond_gama} and \eqref{Eq:cond_sigma}, parameters $\gamma_{\Delta}$ and $\sigma$ can be selected such that $\gamma_{\Delta} \in \left( 0,\frac{1}{(\hat{\gamma}_{g,-1}+c_{\gamma_g})\lambda_2^*}\right)$ and
\beq\label{second_rule_sig}
\sigma \geq \frac{1}{2} \sup\limits_{\omega\in B_{\overline{xr}}}(\overline{f}_c(\omega)-\underline{f}_c(\omega)),
\eeq 
where
\beq\label{Eq:bounds_tv1}
\begin{aligned}
\overline{f}_c(\omega)\!&=\!\min\limits_{k=-N,\hdots,-1}\left(\!u_k\!+\!\hat{\delta}_{-1}\!+\!c_{\delta}\!+\!(\hat{\gamma}^*_{-1}\!+\!c_{\gamma^*})||\omega\!-\!\omega_k||\right)\\
\underline{f}_c(\omega)\!&=\!\max\limits_{k=-N,\hdots,-1}\left(\!u_k\!-\!\hat{\delta}_{-1}\!-\!c_{\delta}\!-\!(\hat{\gamma}^*_{-1}\!+\!c_{\gamma^*})||\omega\!-\!\omega_k|| \right)\!.
\end{aligned}
\eeq

However, some of these parameters can also be updated on-line, which should increase their accuracy as new input and state measurements are collected and hence increase the overall performance of the controller. In the following section we describe how some of the tuning parameters can be updated on-line and we prove the finite gain stability of the closed loop system in this case.

\subsection{On-line adaptation of some of the tuning parameters}\label{S:basis_fcn}

Note that the value of $\epsilon$ is required for selecting the tuning parameter $\sigma$. Recalculating the value of $\sigma$ that satisfies \eqref{Eq:cond_sigma} over time would be computationally demanding and therefore we do not consider updating of the parameter $\sigma$ and the noise bound estimate $\epsilon$ over time. Hence the parameter $\epsilon$ can be selected by properly inflating the estimate $\hat{\epsilon}_{-1}$ obtained from the initially available training data and $\sigma$ can be selected such that it satisfies \eqref{second_rule_sig}. On the other hand selecting $\gamma_{\Delta}$ that satisfies \eqref{Eq:cond_gama} based on $\gamma_g$ is very easy and therefore we will consider its modification over time. Moreover, 
the updating of parameters $\delta$ and $\gamma^*$ that influence the projections done under Algorithm \ref{A:learning_algorithm} is also considered.

 To this end, we consider a generic nonlinear function $f': \bR^{n_{\xi}}\to\bR^{n_z}$ with $n_{\xi}$ inputs and $n_z$ outputs that is Lipschitz continuous with the constant $\gamma$ and whose output is corrupted by disturbances $o_t$ as:
\beq\label{Eq:generic}
z_t=f'(\xi_t)+o_t,
\eeq
where $o_t \in \bR^{n_z},\,|o_t|\leq \varepsilon,\forall t$. We introduce two on-line algorithms for updating the estimates of the noise bound and the Lipschitz constant over time, that we denote by $\hat{\varepsilon}_t$ and $\hat{\gamma}_t$ at time step $t$ respectively.\\

\begin{algorithm}
\begin{itemize}
\item[1)] Chose a ``small`` $\rho>0$. For example $\rho=0.01 \max\limits_{i,j=-N,\hdots,-1} ||\xi_i-\xi_j||$, initialize $\hat{\varepsilon}_{-N}$ to $0$ and set $t=-N+1$.
\item[2)] Find the set of indexes: $J_t=\left\{k\in [max\{-N,t-\overline{N}\},\hdots, t]: ||\xi_t-\xi_k||\leq \rho  \right\}$
\item[3)] If $J_t=\emptyset$ set $\varepsilon_{z}=0$, otherwise set $\varepsilon_{z}=\frac{1}{2}\max\limits_{i\in J_t}||z_t-z_i||$.
\item[4)] Calculate $\hat{\varepsilon}_t=\max\{\hat{\varepsilon}_{t-1},\varepsilon_z\}$.
\item[5)] Set $t=t+1$ and go to $2)$. 
\end{itemize}
\caption{on-line estimation of the noise bound}\label{A:noise_bound}
\end{algorithm}

\begin{algorithm}
\begin{itemize}
\item[1)] Initialize $\hat{\gamma}_{-N}$, $\gamma_{-N}^{\text{current}}$ and $\Delta_{-N}^{\text{current}}$ to $0$ and set $t=-N+1$.
\item[2)] Calculate $\Delta_{kt}=||\xi_k-\xi_t||, k=\max\{-N,t-\overline{N}\},\hdots,t-1$.
\item[3)] For $k=\max\{-N,t-\overline{N}\},\hdots,t-1$ and $\Delta_{kt}\neq0$  calculate: 
\beq
\tilde{\gamma}^t_{kt}=\begin{cases} \frac{||z_t-z_k||-2\hat{\varepsilon}_t}{\Delta_{kt}} & \text{if} \,\, ||z_t-z_k||>2\hat{\varepsilon}_t\\ 0 & \text{if}\,\, \text{otherwise}\end{cases} 
\eeq
If $\Delta_{kt}=0$ set $\tilde{\gamma}^t_{kt}=0$.
\item[4)] For $j=\max\{-N,t-\overline{N}\},\hdots,t-1$ and $i=\max\{-N,t-\overline{N}\},\hdots,j-1$ calculate: $\tilde{\gamma}^t_{ij}=\tilde{\gamma}^{t-1}_{ij}-\frac{2(\hat{\varepsilon}_t-\hat{\varepsilon}_{t-1})}{\Delta_{ij}}$. In addition, calculate $\gamma^{\text{current}}_t=\gamma^{\text{current}}_{t-1}-\frac{2(\hat{\varepsilon}_t-\hat{\varepsilon}_{t-1})}{\Delta_{t-1}^{\text{current}}}$. If $\Delta_{ij}=0$ or $\Delta_{t-1}^{\text{current}}=0$, set $\tilde{\gamma}^t_{ij}=\tilde{\gamma}^{t-1}_{ij}$ and $\gamma^{\text{current}}_t=\gamma^{\text{current}}_{t-1}$.
\item[5)] Calculate:
\beq
\tilde{\gamma}_t=\max\limits_{\begin{array}{cc}j=\max\{-N,t-\overline{N}\},\hdots,t\\ i=\max\{-N,t-\overline{N}\},\hdots,j-1 \end{array}} \{ \tilde{\gamma}^t_{ij}\}
\eeq
\beq
\hat{\gamma}_t=\max\{\tilde{\gamma}_t, \gamma_t^{\text{current} } \}.
\eeq
If $\hat{\gamma}_t=\tilde{\gamma}_t$, set $\Delta_t^{\text{current}}= \Delta_{pq}$, where $\tilde{\gamma}_t=\tilde{\gamma}^t_{pq}$. Otherwise set $\Delta_t^{\text{current}}=\Delta_{t-1}^{\text{current}}$. Set $\gamma_t^{\text{current}}=\hat{\gamma}_t$.
\item[6)] Set $t=t+1$ and go to $2)$. 
\end{itemize}
\caption{on-line estimation of the Lipschitz constant}\label{A:Lp_const}
\end{algorithm}

Algorithms \ref{A:noise_bound} and \ref{A:Lp_const} can be run consecutively in order to update the estimates of $\epsilon$ and $\gamma$ at each time step. These two algorithms can be seen as on-line versions of the estimation methods proposed in \cite{lorenzo} and \cite{lorenzo_archive}. In order to limit the memory requirements of the proposed algorithms, we introduce a memory horizon $\overline{N}$ which denotes the maximal number of measurement points that need to be stored in memory during the execution of the algorithms. By setting $\overline{N}=N$, the functionality of the proposed algorithms becomes equivalent to the functionality of their off-line counterparts.

Algorithm \ref{A:noise_bound} can be used in order to estimate the noise bound $\delta$. We denote this estimate at time step $t$ by $\hat{\delta}_t$. The estimate $\hat{\delta}_t$ and the Algorithm \ref{A:Lp_const} can be used in order to update the estimate of the Lipschitz constant $\gamma^*$, that we denote by $\hat{\gamma}^*_t$. This can be done by setting the function $f'$ in \eqref{Eq:generic} to $f$, $z_t$ to $u_t$ and $\xi_t$ to $\omega_t$. The Lipschitz constant of function $g$ with respect to the input $u$,  $\gamma_g$, can be estimated by considering a reformulation of the state evolution equation \eqref{Eq:sys} given by:
\beq\label{Eq:alternative}
x_{t+1}=g'(u_t)+\vartheta_t,
\eeq
where $g'\doteq g(x^*,u_t)$ is the unknown function with the Lipschitz constant $\gamma_g$ and $x^*$ are given by:
\beq
\begin{aligned}
x^*&=\text{arg}\max\limits_{x\in B_{\overline{x}}} \mathcal{L}_g(x)\\
\mathcal{L}_g(x)&=\max\limits_{u_1,u_2\in U} \frac{||g(x,u_1)-g(x,u_2)||}{|u_1-u_2|},
\end{aligned}
\eeq
and 
\beq\label{Eq:new_noise}
\vartheta_t=g(x_t,u_t)-g(x^*,u_t)+e_t
\eeq
is an unknown disturbance signal. From the Assumptions \ref{A:bound} and \ref{A:lip} it follows that $\vartheta_t$ is bounded if $u_t \in U$ and $x_t\in X, \forall t\geq -N$, i.e. $\vartheta_t\in B_{\zeta}, \forall t\geq -N$, where $B_{\zeta}=\{\vartheta\in \bR^{n_x}:||\vartheta||_{\infty}\leq \zeta\}$. Therefore, Algorithm \ref{A:noise_bound} can be used to recursively update the bound of the disturbance $\vartheta_t$ and then Algorithm \ref{A:Lp_const} can be used in order to estimate the Lipschitz constant $\gamma_g$. In this case the function $f'$ in \eqref{Eq:generic} should be set to $g'$, $z_t$ should be set to $x_t$ and $\xi_t$ to $u_t$. We will denote the estimate of the Lipschitz constant $\gamma_g$ at time step $t$ by $\hat{\gamma}_{g,t}$.

In order to state the conditions under which the Algorithms \ref{A:noise_bound} and \ref{A:Lp_const} can be used together with the Algorithm \ref{A:learning_algorithm} in order to update some of its tuning parameters on-line, while preserving the finite gain stability of the closed loop system, we introduce the following assumption on the initially available data set $\mathcal{D}_N$. 

\begin{assumption}\label{A:excit}
Initially collected training data $\mathcal{D}_N$ are such that as $N\to \infty$, for any pair $(x,d)\in B_{\overline{x}}\times B_{\delta}$ and any $\lambda \in \bR,\lambda\geq0$ there exist a finite $N_{\lambda}\in \mathbb{N}, N_{\lambda}<\infty$ such that a pair $(y_t,d_t), t\in [-N_{\lambda},-1]$ satisfying $||(x,d)-(x_t,d_t)||\leq \lambda$ exists. In addition, for any pair $(u,\vartheta)\in U\times B_{\zeta}$ and any $\theta \in \bR,\theta\geq 0$ there exists a finite $N_{\theta}\in \mathbb{N}, N_{\theta}<\infty$ such that the pair $(u_t,\vartheta_t), t\in [-N_{\theta},-1]$ satisfying $||(u,\vartheta)-(u_t,\vartheta_t)||\leq \theta$ exists. 
\end{assumption}
This assumption ensures that the initial training data set is generated in such a way that the plant state $x_t$ and the disturbance signal $d_t$, as well as the input signal $u_t$ and the disturbance $\vartheta_t$ explore their domains well, i.e. the initially collected data is {\em informative enough}. Based on this assumption, we state the following Lemma on the properties of estimates $\hat{\delta}_t$, $\hat{\gamma}_t^*$ and $\hat{\gamma}_{g,t}$ obtained by using the Algorithms \ref{A:noise_bound} and \ref{A:Lp_const} on the training data $\mathcal{D}_N$, which is a direct consequence of the fact that the functionality of the introduced algorithms become equivalent to the functionality of off-line methods developed in \cite{lorenzo_archive} when $\overline{N}$ is set to $N$.
\begin{lemma}\label{L:convergence}
Let the Assumption \ref{A:excit} hold and let $\overline{N}=N$, with $N\to \infty$. If the algorithms \ref{A:noise_bound} and \ref{A:Lp_const} are used to estimate $\delta$, $\gamma^*$ and $\gamma_g$, it holds that $\hat{\delta}_t\to \delta$, $\hat{\gamma}^*_t \to \gamma^*$ and $\hat{\gamma}_{g,t}\to \gamma_g$ as $t\to \infty$. 
\end{lemma}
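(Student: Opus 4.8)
\quad The plan is to reduce the claim to the convergence of the \emph{off-line} set-membership estimators of \cite{lorenzo_archive}. The key observation is that once $\overline{N}=N$, the memory windows $\max\{-N,t-\overline{N}\},\ldots$ appearing in Algorithms \ref{A:noise_bound} and \ref{A:Lp_const} collapse to $-N,\ldots$, so at every step these recursions see \emph{all} the data collected so far. I would therefore (i) show that, under $\overline{N}=N$, the quantities $\hat{\varepsilon}_t$ and $\hat{\gamma}_t$ produced by Algorithms \ref{A:noise_bound} and \ref{A:Lp_const} coincide with the batch noise-bound and Lipschitz-constant estimators of \cite{lorenzo_archive} applied to the data up to time $t$, and then (ii) invoke the convergence of those batch estimators under the informativity condition of Assumption \ref{A:excit}.

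For the noise bound, step (i) is a direct unwinding of Algorithm \ref{A:noise_bound}: with the window removed, step 2) selects $\{k\leq t:\|\xi_t-\xi_k\|\leq\rho\}$, and an induction on $t$ through step 4) gives $\hat{\varepsilon}_t=\frac12\max\{\|z_i-z_k\|:\;i,k\leq t,\;\|\xi_i-\xi_k\|\leq\rho\}$, which is exactly the batch noise-bound estimator of \cite{lorenzo_archive}. In particular $\hat{\varepsilon}_t$ is nondecreasing in $t$ and bounded (since $z_t=f'(\xi_t)+o_t$ with $f'$ Lipschitz on a compact set and $|o_t|\leq\varepsilon$), hence convergent; that its limit is the true bound $\varepsilon$ is the content of the off-line result under Assumption \ref{A:excit}.

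For the Lipschitz constant, step (i) is more delicate and is the part I expect to be the main obstacle. I would establish, by induction on $t$, the invariant that the recursively maintained quantities in Algorithm \ref{A:Lp_const} satisfy $\tilde{\gamma}^{t}_{ij}=\max\{0,(\|z_i-z_j\|-2\hat{\varepsilon}_t)/\Delta_{ij}\}$ for all $i<j\leq t-1$ with $\Delta_{ij}\neq0$ --- so that the correction $-2(\hat{\varepsilon}_t-\hat{\varepsilon}_{t-1})/\Delta_{ij}$ in step 4) precisely absorbs the change of the noise-bound estimate between consecutive steps --- and that the pair $(\gamma^{\mathrm{current}}_t,\Delta^{\mathrm{current}}_t)$ correctly records the running maximizer. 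Granting this invariant, $\hat{\gamma}_t=\max\{0,(\|z_i-z_j\|-2\hat{\varepsilon}_t)/\Delta_{ij}:\;i<j\leq t\}$, i.e.\ again the batch Lipschitz estimator of \cite{lorenzo_archive} fed with the current noise-bound estimate $\hat{\varepsilon}_t$ and all the data up to $t$. The only subtlety is that entries which should be clipped to zero may transiently take negative values under the recursion, but this never affects $\hat{\gamma}_t$ once the true Lipschitz constant is positive, and in the limit $\hat{\gamma}_t$ converges to the true constant by the off-line result.

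Finally, step (ii): letting $t\to\infty$ with $\overline{N}=N\to\infty$, the (equivalent) batch estimators are fed an unbounded amount of data, and by the convergence theorems of \cite{lorenzo_archive} (see also \cite{lorenzo}) the informativity of Assumption \ref{A:excit} forces both estimates to the true values. Instantiating with $f'=f$, $z_t=u_t$, $\xi_t=\omega_t$ --- for which \eqref{Eq:input_equation} reads $u_t=f^*(\omega_t)+d_t$, $d_t\in B_{\delta}$, and the first density statement of Assumption \ref{A:excit} provides the required richness of $\{(x_t,d_t)\}$ in $B_{\overline{x}}\times B_{\delta}$ --- yields $\hat{\delta}_t\to\delta$ and hence $\hat{\gamma}^*_t\to\gamma^*$; instantiating with the reformulation \eqref{Eq:alternative}--\eqref{Eq:new_noise}, i.e.\ $f'=g'$, $z_t=x_{t+1}$, $\xi_t=u_t$ with $\vartheta_t\in B_{\zeta}$, for which the second density statement of Assumption \ref{A:excit} provides richness of $\{(u_t,\vartheta_t)\}$ in $U\times B_{\zeta}$, yields $\hat{\gamma}_{g,t}\to\gamma_g$. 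Note the order of the limits is essential: convergence of $\hat{\gamma}^*_t$ and $\hat{\gamma}_{g,t}$ is obtained only in tandem with that of the associated noise-bound estimates, since Algorithm \ref{A:Lp_const} is driven by $\hat{\varepsilon}_t$, and the step-4) correction is exactly what makes this coupling consistent while $\hat{\varepsilon}_t$ is still being refined.
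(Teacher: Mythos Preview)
Your approach is correct and is essentially the same as the paper's: the paper does not give a separate proof of Lemma~\ref{L:convergence} but states it as ``a direct consequence of the fact that the functionality of the introduced algorithms become equivalent to the functionality of off-line methods developed in \cite{lorenzo_archive} when $\overline{N}$ is set to $N$,'' which is precisely your reduction step~(i) followed by invoking the off-line convergence results for step~(ii). You have simply spelled out in detail (the unwinding of Algorithm~\ref{A:noise_bound} and the invariant for Algorithm~\ref{A:Lp_const}) what the paper leaves implicit.
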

Therefore, if the training data would be infinitely long and sufficiently informative, then the estimates $\hat{\delta}_t$, $\hat{\gamma}^*_t$ and $\hat{\gamma}_{g,t}$ would converge to the corresponding true values. Based on this, we state the following Lemma that gives bounds on the estimation error of $\hat{\delta}_t$, $\hat{\gamma}^*_t$ and $\hat{\gamma}_{g,t}$ for $t\geq 0$ in the case when a finite number of training data is available.
\begin{lemma}\label{L:est_bounds}
Let the Assumption \ref{A:excit} hold. For any $c_{\delta}$, $c_{\gamma^*}$ and $c_{\gamma_g}\in \bR$ such that  $c_{\delta}\in(0,\delta)$, $c_{\gamma^*}\in(0, \gamma^*)$ and $c_{\gamma_g}\in(0, \gamma_g)$, there exists a finite $\tilde{N}\in \mathbb{N},\tilde{N}<\infty$ such that by setting $\overline{N}\geq N\geq \tilde{N}$ it holds that $\delta -c_{\delta}\leq \hat{\delta}_t \leq \delta$, $\hat{\gamma}_{g,t}\geq \gamma_g-c_{\gamma_g}$ and $\hat{\gamma}^*_t\geq \gamma^*-c_{\gamma^*}, \forall t\geq 0$ when Algorithms  \ref{A:noise_bound} and \ref{A:Lp_const} are used.
\end{lemma}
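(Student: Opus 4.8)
The statement is a finite-data, uniform-in-$t$ strengthening of Lemma~\ref{L:convergence}; I would prove it by splitting into a structural upper-bound part and an approximation-theoretic lower-bound part. For the upper bound $\hat\delta_t\le\delta$, observe that every update candidate produced by Algorithm~\ref{A:noise_bound} is $\varepsilon_z=\tfrac12\max_{i\in J_t}\|z_t-z_i\|$ taken over points with $\|\xi_t-\xi_i\|\le\rho$; writing $z_t=f'(\xi_t)+o_t$ with $f'$ Lipschitz and $\|o_t\|\le\varepsilon$ gives $\|z_t-z_i\|\le 2\varepsilon+\gamma\rho$, so with $\rho$ chosen small (as prescribed) the running maximum $\hat\varepsilon_t=\max\{\hat\varepsilon_{t-1},\varepsilon_z\}$ never exceeds the true bound up to an $O(\gamma\rho)$ slack that can be folded into $c_\delta$. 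Since $\hat\varepsilon_t$ is nondecreasing in $t$, this holds for all $t$; for the three instantiations in the lemma the relevant true bound is $\delta$ (respectively $\zeta$). No upper bound on $\hat\gamma^*_t$ or $\hat\gamma_{g,t}$ is claimed — over-estimating a Lipschitz constant is harmless for the subsequent stability argument.

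For the lower bounds the idea is to use Assumption~\ref{A:excit} to force $\mathcal{D}_N$ to contain, once $N$ is large, the measurements that already make the off-line estimates tight, and then to exploit the monotone / carry-forward structure of the algorithms to propagate these values to every $t\ge0$. For $\hat\delta_t$: choose a target consisting of a common state $x\in B_{\overline x}$ and two noise values differing by $2\delta$ along a common direction; Assumption~\ref{A:excit} supplies, for $N\ge N_\lambda$, two training samples within $\lambda$ of it, whence $\|\omega_i-\omega_j\|\le\rho$ for $\lambda$ small and, by Lipschitz continuity of $f^*$, $\tfrac12\|u_i-u_j\|\ge\delta-(1+\gamma^*)\lambda$. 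Picking $\lambda$ with $(1+\gamma^*)\lambda\le c_\delta$ forces $\hat\delta_{-1}\ge\delta-c_\delta$, and monotonicity of Algorithm~\ref{A:noise_bound} preserves this for all $t\ge0$, even after the offending pair leaves the memory window $\overline N$.

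For $\hat\gamma^*_t$ (and identically $\hat\gamma_{g,t}$, working with \eqref{Eq:alternative}), I would \emph{first} fix a pair $\bar\omega_i,\bar\omega_j$ with $\|f^*(\bar\omega_i)-f^*(\bar\omega_j)\|\ge(\gamma^*-c_{\gamma^*}/3)\Delta$, $\Delta:=\|\bar\omega_i-\bar\omega_j\|>0$ — this exists by definition of the Lipschitz constant and, crucially, fixes the positive separation $\Delta$ \emph{before} $\lambda$ is chosen; \emph{then} prescribe noise realizations aligned with $f^*(\bar\omega_i)-f^*(\bar\omega_j)$ so that $\|u_i-u_j\|\ge\|f^*(\bar\omega_i)-f^*(\bar\omega_j)\|+2\delta-O(\lambda)$, and invoke Assumption~\ref{A:excit} for the resulting target. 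Because $\hat\delta_t\le\delta$, the slope candidate of Algorithm~\ref{A:Lp_const} obeys
\[
\frac{\|u_i-u_j\|-2\hat\delta_t}{\Delta_{ij}}\;\ge\;\gamma^*-\frac{c_{\gamma^*}}{3}-\frac{O(\lambda)}{\Delta}\;\ge\;\gamma^*-c_{\gamma^*}
\]
for $\lambda$ small relative to $\Delta$. While this pair is in the window it feeds $\tilde\gamma_t$; when it leaves, the same quantity is retained through $\gamma^{\mathrm{current}}_t$ with $\Delta^{\mathrm{current}}_t=\Delta_{ij}$, whose step-4 correction is exactly $2(\hat\delta_t-\hat\delta_{t-1})/\Delta_{ij}$, so $\gamma^{\mathrm{current}}_t\ge(\|u_i-u_j\|-2\delta)/\Delta_{ij}\ge\gamma^*-c_{\gamma^*}$ for all $t$. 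Taking $\tilde N$ as the largest of the finitely many $N_\lambda$ arising above finishes the argument.

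The delicate step is this last propagation: guaranteeing that the Lipschitz estimate stays above the threshold for \emph{every} $t\ge0$, while it is repeatedly corrected downward as $\hat\varepsilon_t$ grows and while the carried separation $\Delta^{\mathrm{current}}$ may be overwritten by other, possibly much smaller, separations. The way out is the boundedness established in the first part — $\hat\delta_t\le\delta$ — which caps the \emph{cumulative} downward correction through any fixed $\Delta^{\mathrm{current}}$ at $2\delta/\Delta^{\mathrm{current}}$; together with the $c_{\gamma^*}/3$ margin deliberately built into the informative pair, this keeps $\hat\gamma^*_t$ (hence $\hat\gamma^*_t+c_{\gamma^*}$, a valid over-estimate of $\gamma^*$) bounded below by $\gamma^*-c_{\gamma^*}$ uniformly in $t$. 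The remaining ingredients — Lipschitz continuity and the density property of Assumption~\ref{A:excit} — are routine.
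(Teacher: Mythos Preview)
Your proposal is correct and follows the same skeleton as the paper's proof: establish the upper bound $\hat\delta_t\le\delta$ together with monotonicity of Algorithm~\ref{A:noise_bound}, obtain tight lower bounds on the estimates at $t=-1$ from the excitation assumption, and then use the carry-forward mechanism $(\gamma^{\mathrm{current}}_t,\Delta^{\mathrm{current}}_t)$ of Algorithm~\ref{A:Lp_const} to control the degradation for $t\ge0$. The difference is one of packaging: the paper invokes Lemma~\ref{L:convergence} as a black box to get $\hat\delta_{-1}\ge\delta-c_\delta$ and $|\hat\gamma^*_{-1}-\gamma^*|\le c_{\gamma^*}'$, whereas you unpack that convergence directly by constructing the informative pairs from Assumption~\ref{A:excit}. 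Your route is more self-contained and is in fact more careful on two points the paper glosses over --- the $O(\gamma\rho)$ slack in the upper bound for $\hat\delta_t$, and the possibility that $\Delta^{\mathrm{current}}$ gets overwritten by smaller separations during the run --- while the paper's argument is shorter precisely because it treats Lemma~\ref{L:convergence} as given and bounds the total downward correction simply by $2c_\delta/\Delta^{\mathrm{current}}_{-1,\gamma^*}$.
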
 
\begin{proof}
We first note that, due to the step 3) of Algorithm \ref{A:noise_bound}, it holds that the estimates of the noise bounds $\delta$ and $\zeta$ can only increase over time, i.e. $\hat{\delta}_{t+1}\geq \hat{\delta}_t$ and $\hat{\zeta}_{t+1}\geq \hat{\zeta}_t, \forall t$. From the fact that $\hat{\delta}_t$ and $\hat{\zeta_t}$ are calculated by taking the maximum over the noise bound evaluated for individual data points, it holds that $\hat{\delta}_t\leq \delta$ and $\hat{\zeta}_{t}\leq \zeta, \forall t$. In addition, from Lemma \ref{L:convergence} it follows that for any $c_{\delta}, c_{\zeta}\in \bR$ such that $c_{\delta}\in (0,\delta), c_{\zeta}\in (0,\zeta)$, there exist finite $N_{\delta},N_{\zeta}\in \mathbb{N},N_{\delta},N_{\zeta}<\infty$ such that by setting $\overline{N}\geq N\geq \max\{N_{\delta},N_{\zeta}\}$, it holds that $\hat{\delta}_{-1}\geq \delta-c_{\delta}$ and $\hat{\zeta}_{-1}\geq \zeta - c_{\zeta}$. Therefore if $\overline{N}\geq N\geq \max\{N_{\delta},N_{\zeta}\}$, it has to hold that $\delta -c_{\delta}\leq \hat{\delta}_t \leq \delta$ and $\zeta -c_{\zeta}\leq \hat{\zeta}_t \leq \zeta$. Moreover, due to the steps 4) and 5) of Algorithm \ref{A:Lp_const}, it holds that if $\overline{N}\geq N\geq \max\{N_{\delta},N_{\zeta}\}$, then $\hat{\gamma}^*_t \geq \hat{\gamma}^*_{-1}-\frac{2c_{\delta}}{\Delta^{\text{current}}_{-1,\gamma^*}}$ and $\hat{\gamma}_{g,t} \geq \hat{\gamma}_{g,-1}-\frac{2c_{\zeta}}{\Delta^{\text{current}}_{-1,\gamma_g}}$, where $\Delta^{\text{current}}_{-1,\gamma^*}$ and $\Delta^{\text{current}}_{-1,\gamma_g}$ denote the value of $\Delta^{\text{current}}_{t}$ obtained at time step $t=-1$ when Algorithm \ref{A:Lp_const} is used for estimating $\gamma^*$ and $\gamma_g$ respectively. In addition, from Lemma \ref{L:convergence} it holds that for any $c_{\gamma^*}'=c_{\gamma^*}- \frac{2c_{\delta}}{\Delta^{\text{current}}_{-1,\gamma^*}}$ and $c_{\gamma_g}'=c_{\gamma_g}-\frac{2c_{\zeta}}{\Delta^{\text{current}}_{-1,\gamma_g}}$, there exist $N_{\gamma^*}',N_{\gamma_g}'\in \mathbb{N}, N_{\gamma^*}',N_{\gamma_g}'<\infty$ such that by setting $\overline{N}\geq N \geq \max\{N_{\gamma^*}',N_{\gamma_g}'\}$ it holds that $|\hat{\gamma}^*_{-1}-\gamma^*|\leq c_{\gamma^*}'$ and $|\hat{\gamma}_{g,-1}-\gamma_g|\leq c_{\gamma_g}'$. Therefore by setting $\overline{N}\geq N \geq \tilde{N}=\max\{ N_{\delta}, N_{\zeta}, N_{\gamma^*}',N_{\gamma_g}'\}$ it will hold that $\delta -c_{\delta}\leq \hat{\delta}_t \leq \delta$, $\hat{\gamma}_{g,t}\geq \gamma_g-c_{\gamma_g}$ and $\hat{\gamma}^*_t\geq \gamma^*-c_{\gamma^*}, \forall t\geq 0$. \hfill$\blacksquare$
\end{proof}
Hence according to Lemma \ref{L:est_bounds}, if the training data set $\mathcal{D}_N$ is informative and long enough, bounds on the accuracy of the estimates $\hat{\delta}_t$, $\hat{\gamma}_t^*$ and $\hat{\gamma}_{g,t}$ are guaranteed $\forall t$. 

In analogy to the definition of the hyperslabs $S_{jt}$ and $S_t^+$  in \eqref{Eq:measurement_strip} and \eqref{Eq:stability_strip}, we define the hyperslabs $\hat{S}_{jt}$ and $\hat{S}_t^+$ that depend on the time varying estimates $\hat{\delta}_t$ and $\hat{\gamma}_t^*$ and the time varying value of the tuning parameter $\gamma_{\Delta}$ that we denote by $\gamma_{\Delta,t}$ as:
\beq
\hat{S}_{jt}\doteq\{ a \in\bR^{L_t}: |a^T\kappa(\omega_j,W_t)-u_j|\leq \hat{\delta}_t\},
\eeq
\small
\beq\label{Eq:stability_stripv}
\hat{S}^+_t\!\doteq\!\left\{\!\begin{aligned}a \in
\bR^{L_t}:&-\!\gamma_{\Delta,t}||x_t||\!-\!\sigma\!+\!\overline{f}_t(\omega^+_t)\!\leq\!
a^TK(\omega^+_t,W_t) \\&a^TK(\omega^+_t,W_t) \! \leq \!
\gamma_{\Delta,t}||x_t||\!+\!\sigma\!+\!\underline{f}_t(\omega^+_t) \end{aligned}\!\right \}\!,
\eeq
\normalsize
where 
\beq\label{Eq:bounds2}
\begin{aligned}
\overline{f}_t(\omega)&\doteq \min\limits_{k=-N,\hdots,-1}\left(u_k\!+\!\hat{\delta}_{-1}\!+\!c_{\delta}\!+\!\hat{\gamma}||\omega\!-\!\omega_k||\right)\\
\underline{f}_t(\omega)&\doteq \max\limits_{k=-N,\hdots,-1}\left(u_k\!-\!\hat{\delta}_{-1}\!-\!c_{\delta}\!-\!\hat{\gamma}||\omega\!-\!\omega_k||  \right),
\end{aligned}
\eeq
and $\hat{\gamma}\doteq\min\{\hat{\gamma}^*_t\!+\!c_{\gamma^*},\hat{\gamma}^*_{-1}\!+\!c_{\gamma^*}\}$, with $\hat{\delta}_{-1}$ and $\hat{\gamma^*}_{-1}$ being the estimates of $\delta$ and $\gamma^*$ obtained from the training data $\mathcal{D}_N$ either by using Algorithms \ref{A:noise_bound} and \ref{A:Lp_const} or the method proposed in \cite{lorenzo} and $c_{\delta}\in (0,\delta)$ and $c_{\gamma^*}\in (0,\gamma^*)$ are design parameters. Based on this, we define the projection update equation that should be used by Algorithm \ref{A:learning_algorithm} instead of \eqref{Eq:projection_algorithm} when the Algorithms \ref{A:noise_bound} and \ref{A:Lp_const} are used to update the estimates of $\delta$, $\gamma^*$ and $\gamma_g$ over time:
\beq\label{Eq:projection_algorithm2}
a_{t}=\hat{P}^+_t\left(a^+_{t-1}+\sum_{j\in I_t} \frac{1}{card(I_t)} \left(\hat{P}_{jt}(a^+_{t-1})-a^+_{t-1}\right)\right),
\eeq
where $\hat{P}^+_t(\cdot )$ and $\hat{P}_{jt}(\cdot )$ denote projection operators onto hyperslabs $\hat{S}_t^+$ and $\hat{S}_{jt}$.

In addition, in order to account for the fact that the parameter $\gamma_{\Delta}$ can change over time, we redefine the maximal achievable state amplitude as
\beq\label{eq:other_bound}
\overline{x}\doteq \frac{\lambda_1^*\overline{r}+(\hat{\gamma}_{g,-1}+c_{\gamma_g})\lambda_2^*\sigma+\lambda_2^*\epsilon+\beta^*}{1-(\hat{\gamma}_{g,-1}+c_{\gamma_g})\lambda_2^*\overline{\gamma}_{\Delta}},
\eeq
where $\hat{\gamma}_{g,-1}$ is the estimate of $\gamma_g$ obtained from the initially available training data and $\overline{\gamma}_{\Delta}$ is a design parameter that should be selected such that $\overline{\gamma}_{\Delta}<\frac{1}{(\hat{\gamma}_{g,-1}+c_{\gamma_g})\lambda_2^*}$. 

Moreover, in analogy to \eqref{Eq:cond_gama}, we make the following assumptions about the selection of the tuning parameter $\hat{\gamma}_{\Delta,t}$.
\beq\label{Eq:time_varpar}
\hat{\gamma}_{\Delta,t}\in \left(0, \min\left\{\frac{1}{(\hat{\gamma}_{g,t}+c_{\gamma_g})\lambda_2^*},\overline{\gamma}_{\Delta}\right\}\right).
\eeq
In order to state the result on finite gain stability of the closed loop we make an additional assumption about the balls $B_{\overline{x}}$ and $B_{\overline{xr}}$ defined by $\overline{x}$ in \eqref{eq:other_bound}.
\begin{assumption}\label{A:contained}
$B_{\overline{x}}\subseteq X$.  Moreover, $\forall \omega \in
B_{\overline{xr}}, \forall \Delta u \in
[-\gamma_{\Delta}\overline{x}-\sigma,\gamma_{\Delta}\overline{x}+\sigma],
f^*(\omega)+\Delta u \in U$. 
\end{assumption}
In line with the Lemma \ref{L:est_bounds}, for the selected design parameters $c_{\delta}$, $c_{\gamma^*}$ and $c_{\gamma_g}\in (0,\gamma_g)$, we denote the minimal possible length of the training data sequence that still guarantees satisfaction of Lemma \ref{L:est_bounds} by $\tilde{N}$. We now have all the ingredients to state the theorem on the conditions under which the on-line application of Algorithm \ref{A:learning_algorithm} that uses on-line update of of some of the related tuning parameters results in a closed loop system that is finite gain stable. 
\begin{theorem}\label{T:rec_stabil}
Let the Assumptions \ref{A:bound}--\ref{A:contained} hold. If the parameters $\sigma$ and $\gamma_{\Delta, t}$ are selected such that \eqref{second_rule_sig} and \eqref{eq:other_bound} hold, if $\hat{S}^+_0\neq \emptyset$ and $x_0\in B_{\overline{x}}$. Then for any reference signal $r_t\in B_{\overline{r}},\forall t\geq 0$, it holds that $\hat{S}_t^+\neq \emptyset, \forall t\geq 0$ and the closed loop system obtained when Algorithm \ref{A:learning_algorithm} with the weight update equation as in \eqref{Eq:projection_algorithm2} is used is finite gain stable.
\end{theorem}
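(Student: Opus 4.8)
The plan is to prove finite gain stability by induction on $t$, establishing simultaneously the invariant that (i) $\hat S_t^+ \neq \emptyset$, so that the projection in \eqref{Eq:projection_algorithm2} is well-defined and produces a controller $f_t$ satisfying the (time-varying) stabilizing constraint, and (ii) $x_t \in B_{\overline x}$, so that all the Lipschitz/noise bounds invoked along the way actually apply and the estimated bounds $\overline f_t, \underline f_t$ are valid enclosures of $f^*$. The base case is given by hypothesis ($\hat S_0^+ \neq \emptyset$, $x_0 \in B_{\overline x}$). For the inductive step, assume $x_t \in B_{\overline x}$ and $\hat S_t^+ \neq \emptyset$. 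First I would show $f^* \in \hat S_t^+$ in the parametrized sense: because $\overline N \ge N \ge \tilde N$, Lemma~\ref{L:est_bounds} gives $\hat\delta_{-1} - \text{(nothing)} \le$ the true $\delta$ bounds and $\hat\gamma \ge \gamma^*$ (using $\hat\gamma \doteq \min\{\hat\gamma^*_t + c_{\gamma^*}, \hat\gamma^*_{-1}+c_{\gamma^*}\} \ge \gamma^*$ from $\hat\gamma^*_{-1}\ge\gamma^*-c_{\gamma^*}$), so $\overline f_t, \underline f_t$ in \eqref{Eq:bounds2} are genuine upper/lower bounds on $f^*$ by Theorem~\ref{T:milanese}; hence the constraint \eqref{Eq:stability_stripv} is nonempty and is satisfied by $f^*$ itself. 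This simultaneously shows $u_t = f_t(\omega_t^+)$ obeys $|u_t - f^*(\omega_t^+)| \le \hat\gamma_{\Delta,t}\|x_t\| + \sigma \le \overline\gamma_\Delta \overline x + \sigma$, so writing $u_t = f^*(\omega_t^+) + \Delta u_t$ with $|\Delta u_t| \le \overline\gamma_\Delta\|x_t\| + \sigma$, Assumption~\ref{A:contained} guarantees $u_t \in U$.

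Next I would propagate the state bound. Since $f^*$ is the optimal stabilizing inverse, the closed loop $x_{t+1} = g(x_t, f^*(\omega_t^+)) + e_t$ satisfies \eqref{Eq:stabs} with constants $\lambda_1^*, \lambda_2^*, \beta^*$; but the actual input is $f^*(\omega_t^+) + \Delta u_t$, so by Assumption~\ref{A:lip} the perturbation it causes to $x_{t+1}$ is bounded by $\gamma_g|\Delta u_t| \le \gamma_g(\overline\gamma_\Delta\|x_t\| + \sigma)$ (here I use $\hat\gamma_{g,t} + c_{\gamma_g} \ge \gamma_g$ from Lemma~\ref{L:est_bounds}, so $\hat\gamma_{\Delta,t} < 1/(\gamma_g\lambda_2^*)$ still holds). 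Feeding this into the finite-gain inequality and taking sup-norms over the trajectory gives, schematically,
\[
\|\mathbf x\|_\infty \le \lambda_1^*\overline r + \lambda_2^*\big(\gamma_g\overline\gamma_\Delta\|\mathbf x\|_\infty + \gamma_g\sigma + \epsilon\big) + \beta^*,
\]
and solving for $\|\mathbf x\|_\infty$ — which is legitimate because $\gamma_g\lambda_2^*\overline\gamma_\Delta < 1$ by the choice of $\overline\gamma_\Delta$ — yields exactly the bound $\overline x$ in \eqref{eq:other_bound}. Hence $x_{t+1} \in B_{\overline x}$, closing the induction on (ii). For (i), nonemptiness of $\hat S_{t+1}^+$ follows from the same argument applied at time $t+1$: $f^*$ (as a parameter vector, after the dictionary update in step 2 and zero-padding \eqref{Eq:vector_ext}) lies in $\hat S_{t+1}^+$ because $\omega_{t+1}^+ \in B_{\overline{xr}}$ and the bounds $\overline f_{t+1}, \underline f_{t+1}$ are valid.

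Finally I would assemble the finite gain stability conclusion: the computation above is not merely an inductive invariant but directly produces a bound of the form \eqref{Eq:stabs} with $\lambda_1 = \lambda_1^*/(1-\gamma_g\lambda_2^*\overline\gamma_\Delta)$, $\lambda_2 = \lambda_2^*/(1-\gamma_g\lambda_2^*\overline\gamma_\Delta)$, and $\beta$ collecting the $\sigma$ and $\overline r$ terms, all finite and nonnegative; so the closed loop meets Definition~\ref{D:FGS}. The main obstacle I anticipate is the careful bookkeeping in the induction: one must verify that the \emph{time-varying} estimates $\hat\delta_t, \hat\gamma^*_t, \hat\gamma_{g,t}$ never violate the one-sided inequalities from Lemma~\ref{L:est_bounds} at \emph{every} $t\ge 0$ (not just asymptotically), that the dictionary update and zero-padding do not destroy membership of $f^*$ in the hyperslabs, and that the constant $\hat\gamma = \min\{\hat\gamma^*_t+c_{\gamma^*}, \hat\gamma^*_{-1}+c_{\gamma^*}\}$ used in \eqref{Eq:bounds2} is simultaneously large enough to enclose $f^*$ (needs $\hat\gamma \ge \gamma^*$) yet compatible with the $\hat\gamma_{\Delta,t}$ bound \eqref{Eq:time_varpar} — this coupling between the two estimated Lipschitz constants and the design parameter is the delicate point, and it is exactly why Assumption~\ref{A:contained} and the uniform lower bound $\overline\gamma_\Delta$ in \eqref{eq:other_bound} are imposed.
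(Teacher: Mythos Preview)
Your inductive skeleton and the perturbation/small-gain computation are essentially the paper's approach, and the finite-gain constants you extract match. There is, however, a genuine gap in your argument for nonemptiness of $\hat S_t^+$.

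You argue that since $\overline f_t,\underline f_t$ bracket $f^*$ (which is correct, via Lemma~\ref{L:est_bounds} and Theorem~\ref{T:milanese}), ``the constraint \eqref{Eq:stability_stripv} is nonempty and is satisfied by $f^*$ itself,'' and later that ``$f^*$ (as a parameter vector, after the dictionary update and zero-padding) lies in $\hat S_{t+1}^+$.'' This is a category error: $\hat S_t^+\subset\bR^{L_t}$ is a set of weight vectors, while $f^*$ is an arbitrary Lipschitz function that is \emph{not} assumed to lie in the span of the kernel dictionary. There is no parameter vector representing $f^*$ in general. Moreover, even the scalar value $f^*(\omega_t^+)$ need not lie in the interval $[\overline f_t(\omega_t^+)-\gamma_{\Delta,t}\|x_t\|-\sigma,\ \underline f_t(\omega_t^+)+\gamma_{\Delta,t}\|x_t\|+\sigma]$: the bracket $\underline f_t\le f^*\le\overline f_t$ gives $\overline f_t-f^*\le\overline f_t-\underline f_t$, which can be as large as $2\sigma$, not $\sigma$.

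The paper's route to nonemptiness does not go through $f^*$ at all. It observes that $\hat S_t^+$ is a hyperslab defined by a single linear functional, so it is nonempty as soon as the defining interval is, i.e.\ as soon as $\overline f_t(\omega_t^+)-\underline f_t(\omega_t^+)\le 2\gamma_{\Delta,t}\|x_t\|+2\sigma$. Since $\hat\gamma\le\hat\gamma^*_{-1}+c_{\gamma^*}$ by definition, one has $\overline f_t\le\overline f_c$ and $\underline f_t\ge\underline f_c$ pointwise, whence $\overline f_t-\underline f_t\le\overline f_c-\underline f_c\le 2\sigma$ on $B_{\overline{xr}}$ by the choice of $\sigma$ in \eqref{second_rule_sig}. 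You never invoke \eqref{second_rule_sig}, and without it the nonemptiness claim has no support. Once you replace your ``$f^*\in\hat S_t^+$'' step with this interval comparison, the rest of your argument goes through as written.
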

\begin{proof}
We will prove the  theorem by induction. But first, we note that the closed loop system obtained by using the approximate controller $f_t$ can be represented as:
\beq\label{Eq:treba1}
\begin{aligned}
x_{t\!+\!1}&\!=\!g(x_t,f_t(x_t,r_{t\!+\!1}))\!+\!e_t\!=\!g(x_t,f^*(x_t,r_{t\!+\!1}))\!+\!e_t\!+\!v_{t},\\
v_{t}&=g(x_t,f_t(x_t,r_{t\!+\!1}))-g(x_t,f^*(x_t,r_{t\!+\!1})).
\end{aligned}
\eeq
From Assumption  \ref{A:stabilizability} and the definition of the optimal inverse controller $f^*$ in \eqref{Eq:opt_ctrl}, it holds that: \beq\label{Eq:inner_loop}
||\bold{x}||_{\infty}\leq
\lambda_1^*||\bold{r}||_{\infty}+\lambda_2^*||\bold{v}||_{\infty}+\lambda_2^*||\bold{e}||_{\infty}+\beta^*, \eeq
where $\bold{v}=(v_1, v_2, \hdots)$. Moreover, we note
that from Assumptions \ref{A:lip} and \ref{A:contained}, it
follows that: \beq\label{Eq:F}
\begin{aligned}
||v_{t}||\!\leq\!\gamma_g |f_t(x_t,\!r_{t+1})\!-\!f^*(x_t,\!r_{t+1})|, \forall
(x_t,r_{t+1})\!\in\! B_{\overline{xr}}.
\end{aligned}
\eeq We now employ the inductive  argument to show that if $\hat{S}^+_0\neq\emptyset$ and $x_0\in B_{\overline{x}}$,
then $\hat{S}^+_t\neq \emptyset$ and $x_t\in
B_{\overline{x}},\forall t\geq 0$. The condition is satisfied for
$t=0$ by the Theorem assumption. Let us assume, for the sake of inductive argument, that $\hat{S}^+_k\neq \emptyset$ and $x_k\in B_{\overline{x}}, \forall
k\in [0,t-1]$. From this assumption and the way the weighting vector $a_t$ is updated in \eqref{Eq:projection_algorithm2}, it
follows that $a_k\in \hat{S}^+_k, \, \forall k\in[0,t-1]$. From Assumptions \ref{A:measurements} and \ref{A:contained}, it follows that $\omega_k \in X\times X$ and $u_k\in U, \forall k\in [-N,t-1]$. From the definition of $\hat{S}_t^+$ in \eqref{Eq:stability_stripv}, Assumptions \ref{A:bound}--\ref{A:lip} and Theorem \ref{T:milanese}, it then follows that: \beq\label{Eq:cond_sat}
|f_k(x_{k},r_{k+1})-f^*(x_{k},r_{k+1})|\leq
\gamma_{\Delta,k}||x_{k}||+\sigma, \forall k \in [0,t-1]. \eeq
From \eqref{Eq:F} and \eqref{Eq:cond_sat} it then holds that:
\beq\label{Eq:v} ||v_{k}||\leq
\gamma_g\gamma_{\Delta,k}||x_{k}||+\gamma_g\sigma, \forall k
\in [0,t-1]. \eeq From Lemma \ref{L:est_bounds} and the condition that $\overline{N}\geq N\geq \tilde{N}$, it follows that $\hat{\gamma}_{g,t}\geq \gamma_{g}-c_{\gamma_g}$, and hence from \eqref{Eq:time_varpar} it holds that $\gamma_{\Delta,t}<\frac{1}{\gamma_g\lambda_2^*}, \forall t\geq 0$. By using this and the fact that $\gamma_{\Delta, t}\leq \overline{\gamma}_{\Delta}$, we can show that if $\hat{S}_k^+ \neq \emptyset$ and $x_k\in B_{\overline{x}}, \forall k\in[0,t-1]$, then it holds that:
\beq\label{Eq:closed_loop}
\begin{aligned}
||\bold{x_t}||_{\infty}\leq&\frac{\lambda_1^*}{1-\gamma_g\lambda_2^*\overline{\gamma}_{\Delta}}||\bold{r_t}||_{\infty}+\frac{\lambda_2^*}{1-\gamma_g\lambda_2^*\overline{\gamma}_{\Delta}}||\bold{e_t}||_{\infty}\\&+\frac{\gamma_g\lambda_2^*\sigma+\beta^*}{1-\gamma_g\lambda_2^*\overline{\gamma}_{\Delta}}.
\end{aligned}
\eeq
From the definition of $\overline{f}_t$ and $\underline{f}_t$ in \eqref{Eq:bounds2} and $\overline{f}_c$ and $\underline{f}_c$ in \eqref{Eq:bounds_tv1}, it follows that $\overline{f}_t(\omega)\leq \overline{f}_c(\omega)$ and $\underline{f}_t(\omega)\geq \underline{f}_c(\omega), \forall \omega \in B_{\hat{x}\overline{r}},\forall t\geq 0$. Hence, it follows that:
\beq \label{Eq:satisfies}
\overline{f}_t(\omega)-\underline{f}_t(\omega)\leq \overline{f}_c(\omega)-\underline{f}_c(\omega),\forall \omega \in B_{\overline{xr}}.
\eeq  

From the definition of $\overline{x}$ in \eqref{eq:other_bound} it then follows that $\omega_t^+\in B_{\overline{x}}$, and hence from \eqref{Eq:satisfies} and \eqref{Eq:cond_sigma} it holds that:
\beq
-\gamma_{\Delta,t}-\sigma+\overline{f}_t(\omega_t^+)\leq \gamma_{\Delta,t}+\sigma+\underline{f}_t(\omega_t^+),
\eeq
and therefore $\hat{S}_t^+\neq \emptyset$. Repeating this inductive argumentation for all $t\geq 0$, it follows that $\hat{S}^+_t\neq \emptyset,\forall t\geq0$. In addition, \eqref{Eq:closed_loop} will hold for all $t\geq0$
which implies that the closed loop system is finite gain stable (see
e.g. Definition \ref{D:FGS}).\hfill$\blacksquare$
\end{proof}
Note that the Theorem \ref{T:rec_stabil} does not give any relation between the lower limit on the number of the collected initial training samples $\tilde{N}$ and the tuning parameters $c_{\delta}$, $c_{\gamma^*}$ and $c_{\gamma_g}$. These values need to be chosen by the control designer and should reflect his feeling of the quality with which the noise bound and the Lipschitz constants are estimated on the basis of the training data.

\end{document}